\newtheorem{definition}{Definition}
\newtheorem{lemma}{Lemma}
\newtheorem{theorem}{Theorem}
\def\unc{unconnected}
\def\con{connected}
\def\gam{{k \choose u} {u(k-u) \choose v} {{k \choose 2} - u(k-u) \choose n - v}}
\begin{document}

\title{Probabilistic Analysis of the Network Reliability Problem on a Random Graph Ensemble}
\author{\authorblockN{Akiyuki Yano and Tadashi Wadayama   } \\[0.3cm]
\authorblockA{Nagoya Institute of Technology\\
{\small email: wadayama@nitech.ac.jp }
\footnote{A part of this work was posted to arXiv (arXiv:1105.5903, 30 May, 2011). }\\
}
}  

\maketitle
\begin{abstract}
In the field of computer science, the network reliability problem for evaluating the network failure probability has been extensively investigated. 
For a given undirected graph $G$, the network failure probability is the probability that edge failures (i.e., edge erasures) make $G$ unconnected. Edge failures are assumed to occur independently with the same probability. The main contributions of the present paper are the upper and lower bounds on the expected network failure probability. We herein assume a simple random graph ensemble that is closely related to the Erd\H{o}s-R\'{e}nyi random graph ensemble. These upper and lower bounds exhibit the typical behavior of the network failure probability. The proof is based on the fact that the cut-set space of $G$ is a linear space over $\Bbb F_2$ spanned by the incident matrix of $G$. The present study shows a close relationship between the ensemble analysis of the expected network failure probability and the ensemble analysis of the error detection probability of LDGM codes with column weight 2.
\end{abstract}

\section{Introduction}
\subsection{Background}
Network systems are ubiquitous in communication networks and power line networks, for example. Designing a reliable network is critical for achieving a system that is robust to unexpected failures. Theoretical treatments, with an abstraction of a real network as an undirected graph, provide insight on robust networks, which will be useful for network design.

The {\em network reliability problem} proposed by Moore and Shannon \cite{REL} 
has been extensively investigated primarily in the fields of computer science and
combinatorics. Assuming that an undirected graph $G$ is given, this undirected graph $G$ can be regarded as an
abstract model of a real network.  For example, in a packet communication scenario, a vertex and an edge represent a router and an communication link, respectively. We also assume that each edge can be broken. If some edges are broken (the event is referred to as {\em edge failure}),  
then the graph may become unconnected.  In a packet communication scenario,
there exist two routers, between which there is no communication route. We refer to such an event 
as a {\em network failure}. Network reliability problems are problems for evaluating the probability of a network failure for a given undirected graph.

Although there are several edge failure models, in the present paper, we adopt the simplest of these. The assumption is as follows. Edge failures occur independently, and the probability of an edge failure is uniformly given by $\epsilon (0 < \epsilon < 1)$. We also assume that all of the vertices are perfectly reliable. In the present paper, we focus on the {\em all-terminal scenario} \cite{ALLTERM}, in which a network is considered 
to be valid only if there exists a path between any two vertices, i.e., the graph representing a network is connected.

Evaluation of the network failure probability for a given undirected graph is known to be a computationally demanding problem. Provan and Ball \cite{NP1} and Valiant \cite{NP2} demonstrated that the network reliability problems are $\#{\cal P}$-complete,
which is a complexity class at least as intractable as ${\cal NP}$. Thus, the existence of polynomial-time algorithms for solving the network reliability problem appears unlikely.
Karger \cite{Poly} presented a randomized polynomial time approximation algorithm for the all terminal network reliability problem. Buzacott \cite{LITTLE1} and Ball and Provan \cite{LITTLE2}, for example, have developed exponential time algorithms for evaluating the exact network failure probability with a smaller exponential order. Instead of exact evaluation, 
Ball and Provan \cite{BALL-PROVAN} and Colbourn \cite{EDGE-PACKING} presented upper bounds and lower bounds for all-terminal network reliability, which can be evaluated in linear time.

\subsection{Main contributions}

In the present paper, we present a probabilistic analysis of all-terminal network reliability problems that is based on a random graph ensemble. The random graph ensemble assumed herein is closely related to the Erd\H{o}s-R\'{e}nyi random graph model
\cite{RANDOMGRAPH}.
The primary contribution of the present paper is the upper and lower bounds on the expected network failure probability.
These upper and lower bounds reveal 
the typical behavior of the network failure probability. The upper bound to be presented in Theorem \ref{upperbound} has 
the following form: 
\begin{equation}
{\sf E}[P_f(G,\epsilon)] \le
\frac{1}{2{{k \choose 2}\choose  n}} \sum_{v=1}^n  \sum_{u=0}^k 
\gam  \epsilon^{v}   
+  \frac{1}{2{{k \choose 2}\choose  n}}  \sum_{u=0}^{k} {k \choose u}  {{k \choose 2}-u(k-u) \choose n  }  - 1,
\end{equation}
where $k, n$ are the number of vertices and edges, respectively. The probability $P_f(G,\epsilon)$ is the network 
failure probability of $G$ under the condition where the edge failure probability is $\epsilon$. 
The expectation is taken over a random graph ensemble of unconnected graphs with $k$-vertices and $n$-edges.
As far as the authors know,
this type of bounds on the expected network failure probability appear to be novel.

In order to derive these bounds, an average cut-set weight distribution is derived. 
The set of cut-set vectors (i.e., the incidence vector of a cut-set) 
of an undirected graph forms a linear space over binary field ${\Bbb F_2}$ spanned by 
an incidence matrix. 
A combinatorial argument  similar to the ensemble analysis of low-density generator matrix (LDGM) code with
column weight 2 is exploited to obtain the average cut-set weight distribution.  
The {\em coding theory perspective} presented in the paper could be useful
not only for the network reliability problem but also for 
several graph problems related to the cut-set structure of a graph.


\section{Preliminaries}
\label{prel}
\subsection{Random graph ensemble}
We herein consider an ensemble (i.e., a probability space) of undirected graphs.
Let $k (k \ge 1)$ be the number of labeled vertices, and let
$n (1 \le n \le k(k-1)/2)$ be the number of labeled edges.
The vertices are labeled from $1$ to $k$, and the edges are labeled from $1$ to $n$.
For any adjacent vertices, only a single edge is allowed.
In the following, $[a,b]$ denotes a set of consecutive integers from $a$ to $b$.
The set $G_{k,n}$ denotes the set of all undirected weighted graphs with 
$k$-vertices and $n$-edges satisfying the above assumption.
For any $G \in G_{k,n}$, the sets of vertices and edges are denoted by $V(G)$ and $E(G)$, respectively.
The cardinality of  $G_{k,n}$ is given by 
$
|G_{k,n} | = n! {{k \choose 2}\choose  n}.
$
Here, we assign the equal probability 
\[
P(G)= \frac{1}{n! {{k \choose 2}\choose  n}}
\]
for $G \in G_{k,n}$ as the probability measure.
Note that the pair $(G_{k,n}, P)$ defines an ensemble of random graphs and 
is  denoted by ${\cal G}_{k,n}$.

\subsection{Cut-set}

For a given undirected graph $G$, a partition $V(G)=V_1 \cup V_2$ satisfying 
$
\emptyset = V_1 \cap V_2
$
is referred to as a {\em cut} of $G$.
The set of edges bridging $V_1$ and $V_2$ is referred to as the {\em cut-set} corresponding to 
a cut $(V_1, V_2)$. The weight of a cut (or a cut-set) is defined as the cardinality of the cut-set and is denoted by $\omega(E)$, where $E$ is a cut-set.

For $G \in G_{k,n}$, the {\em incidence matrix} of $G$, denoted by $M(G) \in \{0,1\}^{k \times n}$,
is defined as follows:
\begin{equation}
M(G)_{i,j} = 
\left\{
\begin{array}{ll}
1, & \mbox{if $i$th vertex connects to $j$th edge}\\
0, & \mbox{otherwise},\\
\end{array}
\right.
\end{equation}
where $M(G)_{i,j}$ is the $(i,j)$-element of $M(G)$.

There are two important properties of $M(G)$.
If $G$ is connected, then the rank (over $\Bbb F_2$) of $M(G)$ is $k-1$.
The row space  (over $\Bbb F_2$) of the incidence matrix $M(G)$ 
corresponds to the set of all possible cut-sets of $G$, which is called a {\em cut space} \cite{Diestel} \cite{CUT}.
Note that the cut space property plays a crucial role in the following analysis. 
The matrix  $M(G)$ can be regarded as a generator matrix of LDGM code with column weight 2 \cite{modern}.
The weight distribution of such an LDGM code can be interpreted as the cut-set weight distribution.

\subsection{Cut-set weight distribution}

We define the cut-set weight distribution of $G$ by
\begin{equation}
B_v(G) = \sum_{E \subset E(G)} \Bbb I [E \mbox{ is a cut-set of }G, \     \omega(E)= v]
\end{equation}
for non-negative integer $v$. The function $\Bbb I[\cdot]$ is an indicator function that takes a value of 1 if the condition is true, otherwise this function takes a value of 0. 

The integer-valued function  
$
A_{u,v}: G_{k,n} \rightarrow \Bbb Z 
$
is defined by
\begin{equation}
A_{u,v}(G) = \sum_{m \in Z^{(k,u)} } \sum_{c \in Z^{(n,v)} }\Bbb I \left[m M(G) = c \right] .
\end{equation}
The set of constant weight binary vectors 
$Z^{(a,b)}$ is defined as
$
Z^{(a,b)} = \{x \in  \{0,1\}^a: w_H(x)= b  \},
$
where the function $w_H(\cdot)$ represents the Hamming weight.  
The following lemma is another representation of the cut-set weight distribution using the
incidence matrix of $G$:
\begin{lemma}\label{bwg}
Assume  $G \in G_{k,n}$. If $G$ is a connected graph,  $B_v(G)$ is given by
\begin{equation}\label{upperbound}
B_v(G) = \frac{1}{2} \sum_{u=0}^k A_{u,v}(G)
\end{equation}
for $u \in [0,k], v \in [0,n]$.

\end{lemma}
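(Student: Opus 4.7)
The plan is to interpret $\sum_{u=0}^{k} A_{u,v}(G)$ as counting preimages under the $\Bbb F_2$-linear map $\phi : \Bbb F_2^k \to \Bbb F_2^n$ defined by $\phi(m) = m M(G)$, and to show that this map is exactly 2-to-1 onto the cut space. By the definition of $A_{u,v}(G)$,
\[
\sum_{u=0}^{k} A_{u,v}(G) = \bigl| \{ m \in \{0,1\}^k : w_H(m M(G)) = v \} \bigr|,
\]
so it suffices to show that each cut-set indicator vector of weight $v$ has exactly two preimages under $\phi$.

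Two observations complete the argument. First, because $G$ is connected, $M(G)$ has rank $k-1$ over $\Bbb F_2$, so $\ker \phi$ is one-dimensional and hence $|\ker \phi| = 2$. Second, each column of $M(G)$ has exactly two $1$'s, since every edge has two distinct endpoints; consequently the all-ones row vector $\mathbf{1}_k$ satisfies $\mathbf{1}_k M(G) = 0$, and therefore $\ker \phi = \{0,\, \mathbf{1}_k\}$. Since the image of $\phi$ coincides with the cut space, i.e., the set of all cut-set indicator vectors of $G$, each such vector $c$ has exactly two preimages $\{m,\, m + \mathbf{1}_k\}$, corresponding to the two ways to name the sides of the partition $V(G) = V_1 \cup V_2$.

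Combining these facts, the set of $m \in \{0,1\}^k$ with $w_H(\phi(m)) = v$ has cardinality exactly $2 B_v(G)$, and dividing by two yields the claimed identity. The only subtle point is identifying the involution $m \leftrightarrow m + \mathbf{1}_k$ on characteristic vectors of vertex subsets with $\ker \phi$, which is what produces the factor of $\tfrac{1}{2}$; beyond this, the proof is essentially a direct application of the two properties of $M(G)$ already recorded in Section \ref{prel}, so I do not anticipate any significant obstacle.
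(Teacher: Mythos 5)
Your proposal is correct and follows essentially the same route as the paper: both identify the image of $m \mapsto m M(G)$ with the cut space and use the rank-$(k-1)$ property of a connected graph to conclude that each cut-set vector has exactly two preimages, yielding the factor $\tfrac{1}{2}$. Your explicit identification of the kernel as $\{0, \mathbf{1}_k\}$ is a nice touch the paper leaves implicit, but it does not change the argument.
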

\begin{proof}
We denote the row space of the incident matrix of $G$ as ${\sf span}(M(G))$.
The weight distribution $B_v(G)$ can be rewritten as follows:
\begin{eqnarray} \nonumber
B_v(G) 
&=&  \sum_{E \subset E(G)} \Bbb I [E \mbox{ is a cut-set of }G, \     \omega(E)= v]\\ \label{abcd}
&=& \sum_{c \in {\sf span}(M(G)) }  \Bbb I \left[ w_H(c)=v \right]. 
\end{eqnarray}
The second equality is due to the fact that the row space of $M(G)$ is equal to the set of all possible 
cut-set vectors of $G$. Note that, based on the assumption, the rank of $M(G)$ should be $k-1$.
The equality 
\begin{equation} \label{rowspace}
\sum_{c \in {\sf span}(M(G)) } h(c) = \frac {1}{2} \sum_{m \in \{0,1\}^k} \sum_{c \in \{0,1\}^n} h(c)\Bbb I \left[c = m M(G) \right]
\end{equation}
holds for any real-valued function $h: \{0,1\}^n \rightarrow \Bbb R$
because
\begin{equation}
|\{m \in \{0,1\}^k \mid  m M(G) = c\}| = 2
\end{equation}
holds for any $c \in {\sf span}(M(G))$.
Substituting (\ref{rowspace}) into (\ref{abcd}), we obtain 
\begin{eqnarray} \nonumber
B_v(G) 
&=& \frac {1}{2} \sum_{m \in \{0,1\}^k} \sum_{c \in \{0,1\}^n} \Bbb I \left[c = m M(G), w_H(c)=v \right] \\ \nonumber
&=& \frac {1}{2} \sum_{u=0}^k \sum_{m \in Z^{(k,u)}}  \sum_{c \in Z^{(n,v)} } \Bbb I \left[c = m M(G)\right] \\
&=& \frac 1 2 \sum_{u=0}^k  A_{u,v}(G).
\end{eqnarray}
\end{proof}

Note that $A_{u,v}(G)$ can be seen as an input-output weight distribution of 
an LDGM code \cite{Hu} if we regard $M(G)$ as a generator matrix with column weight 2.
The variables $u$ and $v$ correspond to the input and output weights, respectively.
Note that $A_{u,v}(G)$ also has a close relationship to the coset weight distribution of 
the low-density parity check (LDPC) code corresponding to a sparse parity check matrix with column weight 2 \cite{wadayama2}.

\section{Ensemble average of cut-set weight distribution}
\label{avecut}
In this section, we discuss the  average (i.e. expectation) of $A_{u,v}(G)$ over the ensemble ${\cal G}_{k,n}$.
In the following, the expectation operator ${\sf E}$ is defined as
\begin{equation}
{\sf E}[f(G)] = \sum_{G \in G _{k,n} } P(G) f(G),
\end{equation}
where $f$ is any real-valued function defined on $G_{k,n}$.

As a preparation for deriving ${\sf E}[A_{u,v}(G)]$, we introduce the following lemma:
\begin{lemma}\label{closedlemma}
Assume that  $m^* \in Z^{(k,u)} $ and 
$c^* \in Z^{(n,v)} $, where $u \in [0,k]$ and $v \in [0,n]$. The following equality holds: 
\begin{eqnarray} \label{closed}
{\sf E} \left[ \Bbb I \left[m^* M(G) = c^* \right]  \right]
&=& \frac{1}{{n \choose v} {{k \choose 2}\choose  n}}{u(k-u) \choose v}  
{{k \choose 2} - u(k-u) \choose n - v}.
\end{eqnarray}
\end{lemma}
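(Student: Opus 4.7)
The plan is to exploit the rigid structure of $M(G)$---each column is the incidence vector of an unordered vertex pair---and to reinterpret $m^* M(G)$ as an edge-cut indicator, after which the probability reduces to a purely combinatorial count.

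First I would let $S \subseteq [1,k]$ denote the support of $m^*$, so $|S|=u$. For any edge $j$ of $G$ whose endpoints are $\{a,b\}$, the $j$-th entry of $m^* M(G)$ equals $m^*_a \oplus m^*_b$, which is $1$ if and only if exactly one of $a,b$ lies in $S$, i.e., iff the pair $\{a,b\}$ crosses the bipartition $(S,\bar S)$. Accordingly, the ${k \choose 2}$ unordered pairs in $[1,k]$ split into exactly $u(k-u)$ crossing pairs and ${k \choose 2}-u(k-u)$ non-crossing pairs.

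Next I would translate the event $\{m^* M(G)=c^*\}$ into a condition on edge assignments: it holds iff, for every edge label $j \in [1,n]$, the pair realizing edge $j$ is a crossing pair when $c^*_j=1$ and a non-crossing pair when $c^*_j=0$. Since edges are labeled and must be distinct, counting the graphs $G \in G_{k,n}$ satisfying this event reduces to choosing an ordered list of $v$ distinct crossing pairs for the $v$ positions with $c^*_j=1$, and, independently, an ordered list of $n-v$ distinct non-crossing pairs for the $n-v$ positions with $c^*_j=0$. The number of such $G$ is therefore
\[
v!\,{u(k-u) \choose v} \cdot (n-v)!\,{{k \choose 2}-u(k-u) \choose n-v}.
\]

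To finish, I would divide this count by $|G_{k,n}|=n!\,{{k \choose 2} \choose n}$ and use the identity $v!(n-v)!/n! = 1/{n \choose v}$ to obtain the claimed formula. The main obstacle is essentially bookkeeping: ensuring the no-multi-edge constraint is correctly handled via the falling-factorial counts, and observing that the choices for the crossing and non-crossing slots are genuinely independent because the two pair types form disjoint subsets of ${k \choose 2}$. The degenerate cases $u=0$ and $u=k$ provide a useful sanity check, since then $u(k-u)=0$ and the formula collapses to $\mathbb{I}[v=0]$, matching the deterministic fact that $m^*M(G)=0$ whenever $m^*\in\{0,\mathbf 1\}$.
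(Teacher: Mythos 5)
Your proposal is correct and follows essentially the same argument as the paper: partition the $\binom{k}{2}$ vertex pairs into the $u(k-u)$ pairs crossing the support of $m^*$ and the rest, count the labeled graphs consistent with $c^*$ as $v!\binom{u(k-u)}{v}\,(n-v)!\binom{\binom{k}{2}-u(k-u)}{n-v}$, and divide by $|G_{k,n}|=n!\binom{\binom{k}{2}}{n}$. The only cosmetic difference is that the paper first normalizes $m^*$ and $c^*$ by a symmetry (WLOG) step, whereas you handle arbitrary supports directly; the counting is identical.
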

\begin{proof}
Due to the symmetry of the ensemble,  we can assume without loss of generality that
the first $u$ elements of $m^*$ are one and the remaining elements are zero.
In a similar manner, $c^*$ is assumed to be a binary vector such that
the first $v$ elements are one and the remaining elements are zero.

In the following, we count the number of labeled graphs that satisfy $m^* M(G) = c^*$ by
counting the number of binary incidence matrices satisfying the above condition.
Let 
$
M(G)=  \left( f_1 \  f_2 \ \cdots f_n \right)
$
where $f_i$ is the $i$th column vector of $M(G)$.
Since $M(G)$ is an incidence matrix, the column weight of $f_i$ is $w_H(f_i) = 2$ for $i \in [1,n]$.
From the above assumptions, we obtain
\begin{equation} \label{mfi}
m^* f_i = 
\left\{
\begin{array}{ll}
1, & i \in [1,v] \\
0, & i \in [v+1,n]. \\
\end{array}
\right.
\end{equation}
We then count the number of allowable combinations of $(f_1,f_2,\ldots, f_n)$ that satisfy (\ref{mfi}).
Let 
\begin{equation}
A = \{f \in \{0,1\}^k \mid  m^* f = 1, w_H(f)=2\}.
\end{equation}
The cardinality of $A$ is given by $|A |= u(k-u)$ because 
a non-zero component of $f$ must have an index within $[1,u]$, and
another non-zero component has an index in the range $[u+1, k]$.
This observation leads to the number of possibilities for $(f_1,f_2,\ldots, f_v)$,
which is given by
$
v! {u(k-u) \choose v}.
$
The remaining $n-v$ columns, $(f_{v+1}, \ldots, f_n)$, should be taken from 
the set $\{f \in \{0,1\}^k \mid  w_H(f)=2\} \backslash A$.
Thus, the number of possibilities for such choices is 
$
(n-v)! {{k \choose 2} - u(k-u) \choose n - v}.
$
In summary, the number of allowable combinations of $(f_1,f_2,\ldots, f_n)$ is given by
\begin{equation}\label{cardx}
\sum_{G \in G_{k,n}}   \Bbb I[m^* M(G) = c^*]= v! (n-v)! {u(k-u) \choose v}{{k \choose 2} - u(k-u) \choose n - v}.
\end{equation}
Thus, the left-hand side of (\ref{closed}) can be rewritten as follows:
\begin{eqnarray} \nonumber
{\sf E}[ \Bbb I[m^* M(G) = c^*]  
&=& \sum_{G \in G_{k,n}}  P(G) \Bbb I[m^* M(G) = c^*] \\ \nonumber
&=&\frac{1}{n! {{k \choose 2}\choose  n}} \sum_{G \in G_{k,n}} \Bbb  I[m^* M(G) = c^*] \\ 
&=& \frac{v! (n-v)! }{n! {{k \choose 2}\choose  n}}{u(k-u) \choose v}  {{k \choose 2} - u(k-u) \choose n - v}.
\end{eqnarray}
The final equality is due to  (\ref{cardx}). 
\end{proof}

The primary result in this section is given as follows:
\begin{lemma} \label{auvw}
The expectation of $A_{u,v}(G)$ is given by
\begin{eqnarray} 
{\sf E}[A_{u,v}(G)]
&=& \frac{1}{{{k \choose 2}\choose  n}}
{k \choose u} {u(k-u) \choose v} {{k \choose 2} - u(k-u) \choose n - v}, 
\end{eqnarray}
where $u \in [0,k], v \in [0,n]$.
\end{lemma}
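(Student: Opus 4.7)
The plan is to reduce the computation of ${\sf E}[A_{u,v}(G)]$ to a direct application of Lemma \ref{closedlemma} by exploiting linearity of expectation. Starting from the definition
\[
A_{u,v}(G) = \sum_{m \in Z^{(k,u)}} \sum_{c \in Z^{(n,v)}} \mathbb{I}\left[m M(G) = c\right],
\]
I would take the expectation on both sides and pull it inside the two finite sums.

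Next, I would observe that the crucial feature of the expression derived in Lemma \ref{closedlemma} is that it depends only on the Hamming weights $u = w_H(m^*)$ and $v = w_H(c^*)$, not on the particular vectors $m^* \in Z^{(k,u)}$ and $c^* \in Z^{(n,v)}$ chosen. Thus every one of the ${k \choose u} {n \choose v}$ summands is equal to the same quantity
\[
\frac{1}{{n \choose v}{{k \choose 2}\choose n}} {u(k-u) \choose v}{{k \choose 2} - u(k-u) \choose n - v}.
\]
Multiplying by the number of summands and cancelling the ${n \choose v}$ factor immediately yields the claimed formula.

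Since Lemma \ref{closedlemma} already carries all the combinatorial weight, there is essentially no obstacle; the proof amounts to a counting step plus one cancellation. The only point worth flagging explicitly in the write-up is that the symmetry of the ensemble ${\cal G}_{k,n}$ under permutations of vertex and edge labels justifies the uniformity used in Lemma \ref{closedlemma}, so the summands are indeed identical and the substitution is legitimate.
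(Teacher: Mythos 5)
Your proposal is correct and follows essentially the same route as the paper's proof: linearity of expectation, then the observation that by the symmetry of the ensemble each of the ${k \choose u}{n \choose v}$ summands equals the quantity from Lemma \ref{closedlemma}, after which the ${n \choose v}$ factor cancels. Nothing is missing.
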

\begin{proof}
The expectation of $A_{u,v}(G)$ can be simplified as follows:
\begin{eqnarray} \nonumber
{\sf E}[A_{u,v}(G)] &=&
{\sf E}  \left[ \sum_{m \in Z^{(k,u)} } \sum_{c \in Z^{(n,v)} }\Bbb I[m M(G) = c]   \right] \\  \nonumber
&=&  \sum_{m \in Z^{(k,u)} } \sum_{c \in Z^{(n,v)} } {\sf E}  \left[\Bbb I[m M(G) = c] \right]\\ \label{aaa}
&=&
{k \choose u} {n \choose v} {\sf E}  \left[ \Bbb I[m^* M(G) = c^*] \right],
\end{eqnarray}
where the final equality is due to the symmetry of the ensemble. The binary vectors $m^* \in Z^{(k,u)} $
$c^* \in \{0,1\}^n $ can be chosen arbitrarily.
Substituting (\ref{closed}) in the previous Lemma into (\ref{aaa}), we obtain the claim of this lemma.
\end{proof}

\section{Probability of unconnected graphs}
\label{probunconnect}
In this section, we discuss the probability such that a randomly chosen graph in $G_{k,n}$ is unconnected.
Such a probability has been investigated in detail \cite{RANDOMGRAPH} \cite{Bollobas}.  
These bounds are required for deriving the primary results of the present paper, which are described later.

\subsection{Upper bound on probability for unconnected graphs}

The upper bound on the unconnected probability 
presented in this section is derived based on the average cut-set weight distribution. 
The unconnected probability $P_U(k,n)$ is defined as 
$
P_U(k,n) = {\sf E} \left[ \Bbb I [G \mbox{: \unc}] \right].
$
An upper bound on the unconnected probability is given by the following lemma:
\begin{lemma} \label{unconnected}
The probability for selecting an unconnected graph from the ensemble ${\cal G}_{k,n}$ is upper 
bounded by 
\begin{eqnarray} 
P_U(k,n) &\le&  
\frac{1}{2{{k \choose 2}\choose  n}} \sum_{u=0}^{k} {k \choose u}  {{k \choose 2}-u(k-u) \choose n  }  - 1.
\end{eqnarray}

\end{lemma}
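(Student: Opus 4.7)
\emph{Proof plan.} My approach is to dominate the unconnected indicator by a quantity whose ensemble expectation can be read off directly from Lemma \ref{auvw}. The natural choice is the $v=0$ slice $A_{u,0}(G)$, which by definition counts weight-$u$ binary vectors in the left $\Bbb F_2$-kernel of $M(G)$, since $w_H(c)=0$ forces $c=0$.

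First I would establish the pointwise identity
\begin{equation}
\sum_{u=0}^k A_{u,0}(G) = 2^{c(G)},
\end{equation}
where $c(G)$ denotes the number of connected components of $G$. The key input is the classical fact that the $\Bbb F_2$-rank of the incidence matrix of a $k$-vertex graph with $c$ components is $k-c$ (the connected case $c=1$ was already invoked in the proof of Lemma \ref{bwg}), so the left kernel has dimension $c$ and cardinality $2^c$; summing $A_{u,0}(G)$ over $u$ exhausts this kernel.

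Next I would invoke the elementary inequality $2^{c-1}-1 \ge \Bbb I[c \ge 2]$, valid for every integer $c \ge 1$ and tight for $c \in \{1,2\}$, to obtain the pointwise estimate
\begin{equation}
\Bbb I [G \mbox{: \unc}] \le \frac{1}{2}\sum_{u=0}^k A_{u,0}(G) - 1.
\end{equation}
Taking expectations over ${\cal G}_{k,n}$ and using linearity reduces the proof to evaluating ${\sf E}[A_{u,0}(G)]$, which by Lemma \ref{auvw} with $v=0$ equals ${k \choose u}{{k \choose 2}-u(k-u) \choose n}/{{k \choose 2} \choose n}$, since ${u(k-u) \choose 0}=1$. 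Substituting yields exactly the claimed upper bound.

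The one genuinely non-routine step is the identification of the left-kernel size with $2^{c(G)}$; once that is in hand, everything else is bookkeeping on top of Lemma \ref{auvw}. The only source of slack in the final bound is the inequality $2^{c-1}-1 \ge \Bbb I[c\ge 2]$, which is tight on graphs with at most two components and over-counts only for highly fragmented graphs.
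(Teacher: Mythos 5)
Your proof is correct and follows essentially the same route as the paper: both arguments bound the unconnected indicator by half the size of the left $\Bbb F_2$-kernel of $M(G)$ minus one (your pointwise inequality $2^{c-1}-1\ge \Bbb I[c\ge 2]$ is exactly the paper's step ${\sf E}[T(G)]\ge 2+2P_U(k,n)$, stated there via the rank probabilities $p_i$), and then evaluate the expected kernel size from the $v=0$ slice of the average weight distribution. The only cosmetic differences are that you phrase the kernel size through the number of components rather than the rank of $M(G)$, and cite Lemma \ref{auvw} instead of Lemma \ref{closedlemma} directly.
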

\begin{proof}
Let $p_i$ be the probability such that the incidence matrix of a randomly chosen graph 
has rank $k-i (i \in [1,k-1])$, i.e., 
\begin{equation}
p_i = \sum_{G \in G_{k,n}} P(G) \Bbb I [{\sf rank}(M(G)) = k-i] .
\end{equation}
Using $p_i$, the unconnected probability $P_U(k,n)$ can be rewritten as follows: 
\begin{eqnarray} \nonumber
P_U(k,n) &=& {\sf E} \left[ \Bbb I [G \mbox{: \unc}] \right] \\ 
&=& \sum_{G \in G_{k,n}} P(G) \Bbb I [{\sf rank}(M(G)) < k-1]  
= \sum_{i=2}^{k-1} p_i.
\end{eqnarray}
The cardinality of the set $\{m \in \{0,1\}^k \mid  m M(G) = 0 \}$ is denoted by $T(G)$, which 
can be transformed into 
\begin{eqnarray} \nonumber
T(G) 
&=& |\{m \in \{0,1\}^k \mid  m M(G)=0  \}|  \\
&=&   \sum_{u=0}^k \sum_{m \in Z^{(k,u)}} \Bbb I [m M(G)=0].
\end{eqnarray}
The expectation of $T(G)$ can be expressed simply as follows:
\begin{eqnarray} \nonumber
{\sf E}[T(G)] 
&=& {\sf E} \left[\sum_{u=0}^k \sum_{m \in Z^{(k,u)}} \Bbb I [m M(G)=0]  \right] \\ \nonumber
&=& \sum_{u=0}^k \sum_{m \in Z^{(k,u)}}{\sf E} \left[ \Bbb I [m M(G)=0]  \right] \\ \nonumber
&=& \sum_{u=0}^k  {k \choose u} {\sf E} \left[ \Bbb I [m M(G)=0]  \right] \\
&=& \sum_{u=0}^k \frac{{k \choose u}  {{k \choose 2}-u(k-u) \choose n  } }{ {{k \choose 2} \choose n} }.
\end{eqnarray}
The final equality is due to a special case of Lemma \ref{closedlemma}.
The expectation $E[T(G)]$ can be lower bounded by $2+2 P_U(k,n)$ as follows: 
\begin{eqnarray} \nonumber
E[T(G)] 
&=& 2 p_1 + 4 p_2 + \cdots 2^{k-1} p_{k-1} \\ \nonumber
&\ge& 2 p_1 + 4(p_2 + \cdots  p_{k-1}) \\ \nonumber
&=& 2 (1-P_U(k,n)) + 4P_U(k,n) \\
&=& 2  + 2P_U(k,n). 
\end{eqnarray}
This lower bound on $E[T(G)]$ leads to an upper bound on $P_U(k,n)$, as shown below:
\begin{eqnarray} \label{PUupper}
P_U(k,n) &\le&  \frac{1}{2{{k \choose 2}\choose  n}}  \sum_{u=0}^{k} {k \choose u}  {{k \choose 2}-u(k-u) \choose n  }  - 1.
\end{eqnarray}
\end{proof}
Note that Erd\H{o}s and R\'{e}nyi \cite{RANDOMGRAPH} derived
a similar bound based on another combinatorial argument.

\subsection{Lower bound on probability for unconnected graphs}
The lower bound on the probability for an unconnected graph presented below can be derived based on 
a simple combinatorial argument.
\begin{lemma} \label{unconnectedlower}
The probability of selecting an unconnected graph from the ensemble ${\cal G}_{k,n}$ is lower 
bounded by 
\begin{equation}
P_U(k,n) \ge
\frac{k}{{{k \choose 2} \choose n}} \left( {{k-1 \choose 2} \choose n} - (k-1){{k-2 \choose 2} \choose n} \right).
\end{equation}
\end{lemma}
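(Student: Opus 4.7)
The plan is to exploit the elementary observation that any graph containing an isolated (degree-$0$) vertex is necessarily \unc{}, and to lower bound $P_U(k,n)$ by the probability that $G \in {\cal G}_{k,n}$ contains such a vertex. Rather than invoking full inclusion-exclusion on the events ``vertex $v$ is isolated'', I will obtain a cleaner bound by counting graphs with \emph{exactly one} isolated vertex, since across the choice of that vertex the corresponding events are pairwise disjoint.

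First I would carry out a short enumeration. For any fixed $v \in [1,k]$, let ${\cal A}_v \subseteq G_{k,n}$ denote the set of labeled graphs in which $v$ is isolated; since such a graph is obtained by placing the $n$ labeled edges among the ${k-1 \choose 2}$ potential edges on the remaining $k-1$ vertices, $|{\cal A}_v| = n!{{k-1 \choose 2} \choose n}$. Repeating the same argument with two prescribed isolated vertices $v \ne w$ gives $|{\cal A}_v \cap {\cal A}_w| = n!{{k-2 \choose 2} \choose n}$.

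Next, I would introduce ${\cal B}_v := {\cal A}_v \setminus \bigcup_{w \ne v} {\cal A}_w$, the set of graphs in which $v$ is the \emph{unique} isolated vertex. A single application of the union bound then gives
\[
|{\cal B}_v| \ge |{\cal A}_v| - \sum_{w \ne v} |{\cal A}_v \cap {\cal A}_w| = n!\left[{{k-1 \choose 2} \choose n} - (k-1){{k-2 \choose 2} \choose n}\right].
\]
The sets $\{{\cal B}_v\}_{v=1}^{k}$ are pairwise disjoint and every member of each is \unc{}, so summing over $v$ and dividing by $|G_{k,n}| = n!{{k \choose 2} \choose n}$ delivers the stated lower bound.

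I do not anticipate any serious obstacle; the argument is purely combinatorial and hinges only on two cardinality counts and one union bound. The only real design choice is whether to use this ``unique isolated vertex'' route or a slightly sharper Bonferroni inequality applied to $\bigcup_v {\cal A}_v$ directly. The former sacrifices a factor of two in the second term (the coefficient $k(k-1)$ rather than ${k \choose 2}$) in exchange for avoiding any inclusion-exclusion bookkeeping, which matches precisely the constant appearing in the statement of the lemma.
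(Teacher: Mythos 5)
Your proposal is correct and follows essentially the same route as the paper: lower bounding $P_U(k,n)$ by the probability of an isolated vertex, with the counts $n!{{k-1 \choose 2} \choose n}$ and $n!{{k-2 \choose 2} \choose n}$ and a per-vertex subtraction of $(k-1)$ pairwise terms, yielding exactly the stated bound. Your ``unique isolated vertex'' decomposition into disjoint sets ${\cal B}_v$ is simply a cleaner, fully rigorous justification of the overcount-compensation step that the paper argues informally.
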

\begin{proof}
The main concept of the lower bound is to bound $P_U(k,n)$ from below by the probability 
such that a randomly chosen graph $G$ has  a zero row vector in its incidence matrix $M(G)$, i.e.,
\begin{eqnarray} \nonumber
P_U(k,n) &=& {\sf E} \left[ \Bbb I [G \mbox{: \unc}] \right] \\ \label{aaa}
&\ge& {\sf E} \left[ \Bbb I [M(G) \mbox{ has a zero row vector}] \right].
\end{eqnarray}
This inequality is obtained for the following reason.
If $M(G)$ contains a zero row vector, then $G$ is unconnected because $G$ contains 
an isolated node corresponding to zero vectors of $M(G)$. This implies that
\begin{equation}
\Bbb I [G \mbox{: \unc}]  \ge  \Bbb I [M(G) \mbox{ has a zero row vector}] 
\end{equation}
holds for any $G \in G_{k,n}$. The right-hand side of (\ref{aaa}) can be further simplified as follows:
\begin{eqnarray} \nonumber
{\sf E} \left[ \Bbb I [M(G) \mbox{ has a zero row vector}] \right]   
&=& \sum_{G \in G_{k,n}} P(G) \Bbb I [M(G) \mbox{ has a zero row vector}] \\ \nonumber
&=&\frac{1}{n! {{k \choose 2} \choose n}}  \sum_{G \in G_{k,n}} \Bbb I [M(G) \mbox{ has a zero row vector}] \\ 
&\ge&\frac{n! k}{n! {{k \choose 2} \choose n}}  \left( {{k-1 \choose 2} \choose n} - (k-1){{k-2 \choose 2} \choose n} \right).
\end{eqnarray}
The final equality is due to the following combinatorial argument.  
Here, we count the number of possible binary matrices having a zero vector.
Every column of $M(G)$ must contain two-ones, and there are $k-1$-possible positions for such 
two-ones because we assumed that a row are constrained to be zero. This implies that the size of the set of possible column vectors becomes ${k-i \choose 2}$.  
Therefore, the number of binary matrices having a zero row vector can be lower bounded by
\begin{equation} \label{overcount}
 \sum_{G \in G_{k,n}} \Bbb I [M(G) \mbox{ has a zero row vector}]  \ge
   n! k \left( {{k-1 \choose 2} \choose n} - (k-1){{k-2 \choose 2} \choose n} \right).
\end{equation}
The factors $k$ and $n!$ in the above expression are the number of possible positions of the zero vector and
the number of possible way to sort $n$-column vectors, respectively.
The negative term $- (k-1){{k-2 \choose 2} \choose n}$ compensates the overcounts for multiple zero 
vectors. Again, this negative compensation term  overcounts  multiple zero vectors. 
Therefore, RHS of (\ref{overcount}) is smaller than LHS of (\ref{overcount}).
The claim of the lemma follows directly from the above discussion.
\end{proof}

\section{Bounds on expected network failure probability}

We assume that the edge failures occur independently with probability $\epsilon (0 < \epsilon < 1)$. For a given graph $G \in G_{k,n}$, the edge failures transform $G$ into $G'$, where $G'$ is referred to as a {\em survivor subgraph} of $G$. The network failure probability $P_f(G,\epsilon)$ is the probability such that the survivor subgraph $G'$ becomes unconnected.
The precise definition of the network failure probability is given as follows:
\begin{definition}
For $G \in G_{k,n}$, the network failure probability $P_f(G,\epsilon)$ is defined by
\begin{equation}
P_f(G,\epsilon) 
= \Bbb I[G \mbox{: \con}]\left(\sum_{E' \subset E(G)}  \Bbb I[(V(G),E \backslash E') \mbox{: \unc}] 
\epsilon^{|E'|} (1-\epsilon)^{n - |E'|} \right)+ \Bbb I [G \mbox{: \unc}].
\end{equation}
\end{definition}

Based on this definition, it is evident that $P_f(G,\epsilon) = 1$ holds if $G$ is unconnected.
If $G$ is a connected graph, then $P_f(G,\epsilon)$ is equal to the probability corresponding to unconnected survivor subgraphs of $G$.
In this section, the upper and lower bounds on the expected network failure probability are presented.

\subsection{Upper bound on expected network failure probability}

The following theorem gives an upper bound on the expected network failure probability.
\begin{theorem}[Upper bound]
\label{upperbound}
The expectation of the network failure probability $P_f(G,\epsilon)$ over ${\cal G}_{k,n}$ can be 
upper bounded by 
\begin{equation}
{\sf E}[P_f(G,\epsilon)] \le
\frac{1}{2{{k \choose 2}\choose  n}} \sum_{v=1}^n  \sum_{u=0}^k 
\gam  \epsilon^{v}   
+  \frac{1}{2{{k \choose 2}\choose  n}}  \sum_{u=0}^{k} {k \choose u}  {{k \choose 2}-u(k-u) \choose n  }  - 1.
\end{equation}
\end{theorem}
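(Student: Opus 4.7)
The plan is to split ${\sf E}[P_f(G,\epsilon)]$ according to whether $G$ is connected. Writing
\[
{\sf E}[P_f(G,\epsilon)] = {\sf E}\bigl[P_f(G,\epsilon)\,\Bbb I[G \mbox{: \con}]\bigr] + {\sf E}\bigl[\Bbb I[G \mbox{: \unc}]\bigr],
\]
the second summand equals $P_U(k,n)$ (since $P_f(G,\epsilon)=1$ on unconnected graphs), and Lemma~\ref{unconnected} already bounds this by the second sum appearing in the theorem. Half of the claim is therefore immediate.

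For the first summand I would apply a union bound over cut-sets. If $G$ is connected, the survivor subgraph $G'$ is unconnected precisely when some non-trivial cut $(V_1,V_2)$ of $V(G)$ has its entire cut-set contained in the set of erased edges, an event of probability $\epsilon^{\omega(E)}$. Any non-trivial cut of a connected graph has a non-empty cut-set, so a union bound over cut-sets with $v\ge 1$ yields
\[
P_f(G,\epsilon)\le\sum_{v=1}^n B_v(G)\,\epsilon^v
\]
for connected $G$. On unconnected graphs the product $B_v(G)\,\Bbb I[G \mbox{: \con}]$ vanishes while $\frac{1}{2}\sum_{u=0}^k A_{u,v}(G)\ge 0$ holds trivially, so combined with Lemma~\ref{bwg} one obtains the pointwise bound $B_v(G)\,\Bbb I[G \mbox{: \con}]\le \frac{1}{2}\sum_{u=0}^k A_{u,v}(G)$ valid on all of $G_{k,n}$.

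Taking expectations, exchanging the two finite sums, and substituting the closed form from Lemma~\ref{auvw} gives
\[
{\sf E}\bigl[P_f(G,\epsilon)\,\Bbb I[G \mbox{: \con}]\bigr]\le \frac{1}{2{{k \choose 2}\choose n}}\sum_{v=1}^n\sum_{u=0}^k\gam\,\epsilon^v,
\]
which matches the first summand in the theorem; adding the $P_U(k,n)$ bound completes the argument. It is also worth noting that the $u\in\{0,k\}$ contributions vanish for every $v\ge 1$, so including them in the outer sum costs nothing.

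The main obstacle is the union-bound step: it rests on the cut-set characterization of disconnection, namely that every disconnection of $G'$ corresponds to the complete erasure of at least one cut-set of $G$, and on the observation that non-trivial cuts of a connected graph have weight $v\ge 1$, which justifies why the outer sum in the theorem starts at $v=1$. Once this is in place, the remaining manipulations—linearity of expectation, pulling the indicator inside the sums, and invoking Lemmas~\ref{bwg} and \ref{auvw}—are routine.
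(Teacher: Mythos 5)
Your proposal is correct and follows essentially the same route as the paper: a union bound over cut-sets giving $P_f(G,\epsilon)\le\sum_{v\ge 1}B_v(G)\epsilon^v$ for connected $G$, conversion of $B_v(G)$ to $\frac12\sum_u A_{u,v}(G)$ via Lemma~\ref{bwg} (with the indicator dropped by nonnegativity), expectation via Lemma~\ref{auvw}, and Lemma~\ref{unconnected} for the $P_U(k,n)$ term. The only cosmetic difference is that the paper drops the factor $\Bbb I[G\mbox{: \con}]\le 1$ before taking expectations rather than phrasing it as a pointwise bound valid on all of $G_{k,n}$, which amounts to the same thing.
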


\begin{proof}
We first derive an upper bound on $P_f(G,\epsilon)$ for any $G \in G_{k,n}$.
\begin{eqnarray} \nonumber
P_f(G,\epsilon) &=&  \Bbb I[G \mbox{: \con}] \left(\sum_{E' \subset E(G)}  \Bbb I[(V(G),E \backslash E') \mbox{: \unc}] 
\epsilon^{|E'|} (1-\epsilon)^{n - |E'|} \right)
+ \Bbb I [G \mbox{: \unc}] \\ \nonumber
&\le& \Bbb I[G \mbox{: connected}] \left(\sum_{v=1}^n 
\sum_{E' \subset E(G)} \Bbb I [E' \mbox{ is a cut-set of }G, \     \omega(E')= v] 
\epsilon^{v} (\epsilon + (1-\epsilon))^{n-v}   \right)+ \Bbb I [G \mbox{: unconnected}] \\ \nonumber
&=& \Bbb I[G \mbox{: \con}] \left(\sum_{v=1}^n  B_v(G)
\epsilon^{v}   \right) + \Bbb I [G \mbox{: \unc}] \\ \nonumber
&=& \Bbb I[G \mbox{: \con}] \left(\frac 1 2 \sum_{v=1}^n  \sum_{u=0}^k A_{u,v}(G)
\epsilon^{v}   \right) + \Bbb I [G \mbox{: \unc}] \\ \label{unc}
&\le& \frac 1 2 \sum_{v=1}^n  \sum_{u=0}^k A_{u,v}(G) \epsilon^{v}  + \Bbb I [G \mbox{: \unc}].
\end{eqnarray}
Lemma \ref{bwg} is used in this derivation.
The final inequality is due to the inequality $\Bbb I[G \mbox{: \con}] \le 1$ for any $G \in G_{k,n}$. 
Using the upper bound shown above, the expectation of $P_f(G,\epsilon)$ can be upper bounded as follows:
\begin{eqnarray} \nonumber
{\sf E}[P_f(G,\epsilon)] &\le& 
{\sf E}\left[ \frac 1 2 \sum_{v=1}^n  \sum_{u=0}^k A_{u,v}(G) \epsilon^{v} \right]  + {\sf E}[\Bbb I [G \mbox{: \unc}]] \\ \nonumber
&=& 
 \frac 1 2 \sum_{v=1}^n  \sum_{u=0}^k {\sf E}\left[ A_{u,v}(G) \right] \epsilon^{v}   + P_U(k,n) \\ \nonumber
&\le& 
\frac{1}{2{{k \choose 2}\choose  n}} \sum_{v=1}^n  \sum_{u=0}^k 
\gam  \epsilon^{v}   
+  \frac{1}{2{{k \choose 2}\choose  n}}  \sum_{u=0}^{k} {k \choose u}  {{k \choose 2}-u(k-u) \choose n  }  - 1. \\
\end{eqnarray}
The first inequality is due to (\ref{unc}). Lemmas \ref{auvw} and \ref{unconnected} are exploited in the derivation.
\end{proof}

\subsection{Lower bound on an expected network failure probability}
The following theorem includes a lower bound on the expected network failure probability.
\begin{theorem}[Lower bound]
The expectation of the network failure probability $P_f(G,\epsilon)$ over ${\cal G}_{k,n}$ can be 
lower bounded by 
\begin{equation}
{\sf E}[P_f(G,\epsilon)] \ge
\frac{1}{2{{k \choose 2}\choose  n}} \sum_{v=1}^n  \sum_{u=0}^k 
{k \choose u} {u(k-u) \choose v} {{k \choose 2} - u(k-u) \choose n - v} \epsilon^{v}   (1-\epsilon)^{n-v}
+ \frac{k}{2{{k \choose 2} \choose n}} \left( {{k-1 \choose 2} \choose n} - (k-1){{k-2 \choose 2} \choose n} \right).
\end{equation}
\end{theorem}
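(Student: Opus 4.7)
The plan is to mirror the proof of Theorem~\ref{upperbound}, replacing the union bound used there with a disjoint-events lower bound. First, for any $G\in G_{k,n}$, I would consider the events $\{E'=E_c\}$ as $E_c$ ranges over the distinct nonempty cut-sets of $G$. These events are pairwise disjoint (each specifies $E'$ exactly) and each one forces the survivor $(V(G),E(G)\setminus E')$ to be unconnected: removing a nonempty cut-set of a connected graph disconnects it, and an already unconnected graph remains unconnected under any further edge deletions. Summing the probabilities of these disjoint events yields
\begin{equation*}
P_f(G,\epsilon) \ge \sum_{v=1}^{n} B_v(G)\, \epsilon^{v}(1-\epsilon)^{n-v}.
\end{equation*}

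Next, for connected $G$ I would apply Lemma~\ref{bwg} to substitute $B_v(G) = \frac{1}{2}\sum_{u} A_{u,v}(G)$, and for unconnected $G$ I would use the trivial bound $P_f(G,\epsilon) = 1 \ge \Bbb I[G\mbox{: \unc}]$. Combining the two cases gives the pointwise inequality
\begin{equation*}
P_f(G,\epsilon) \ge \frac{1}{2}\, \Bbb I[G\mbox{: \con}] \sum_{v=1}^{n}\sum_{u=0}^{k} A_{u,v}(G)\, \epsilon^{v}(1-\epsilon)^{n-v} + \Bbb I[G\mbox{: \unc}],
\end{equation*}
which mirrors exactly the corresponding step in the upper-bound proof. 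Taking expectations, invoking Lemma~\ref{auvw} for the closed form of ${\sf E}[A_{u,v}(G)]$, and invoking Lemma~\ref{unconnectedlower} to lower-bound ${\sf E}[\Bbb I[G\mbox{: \unc}]] = P_U(k,n)$ then produces the two summands on the right-hand side of the claim.

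The main obstacle is disposing of the $\Bbb I[G\mbox{: \con}]$ factor attached to the $A_{u,v}$ sum after expectation. Writing $\Bbb I[G\mbox{: \con}] = 1 - \Bbb I[G\mbox{: \unc}]$ introduces a correction ${\sf E}[\Bbb I[G\mbox{: \unc}] \sum_{v,u} A_{u,v}(G)\, \epsilon^{v}(1-\epsilon)^{n-v}]$ which is not innocuous: the kernel of $M(G)$ for an unconnected graph has size $2^{\rho}$, where $\rho$ is the number of components, so $A_{u,v}(G)$ can grow on unconnected graphs. The anticipated remedy is to absorb this correction into one copy of $P_U(k,n)$ coming from the $\Bbb I[G\mbox{: \unc}]$ term, so that the remaining copy yields the stated $\frac{1}{2}$ times the bound of Lemma~\ref{unconnectedlower}.
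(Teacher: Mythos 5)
Your plan follows the paper's own route up to the pointwise inequality: the paper likewise keeps only the disjoint events in which the failed edge set equals a nonempty cut-set, applies Lemma~\ref{bwg}, and arrives at $P_f(G,\epsilon)\ge \Bbb I[G \mbox{: connected}]\,X(G)+\Bbb I[G\mbox{: unconnected}]$, where $X(G)=\frac{1}{2}\sum_{v=1}^n\sum_{u=0}^k A_{u,v}(G)\,\epsilon^v(1-\epsilon)^{n-v}$. The genuine gap is the step you leave as an ``anticipated remedy'': to reach the stated bound you must show that the correction term ${\sf E}\left[\Bbb I[G\mbox{: unconnected}]\,X(G)\right]$ is at most $\frac{1}{2}P_U(k,n)$, and nothing in your sketch proves this; without such a bound the expectation step does not produce the displayed right-hand side. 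The paper closes exactly this hole pointwise: it uses $(1-\Bbb I[G\mbox{: unconnected}])X(G)+\Bbb I[G\mbox{: unconnected}]\ge X(G)+\frac{1}{2}\Bbb I[G\mbox{: unconnected}]$, which requires $X(G)\le\frac{1}{2}$ on unconnected graphs, and it justifies that via $\frac{1}{2}\sum_{u=0}^k A_{u,v}(G)\le{n \choose v}$ together with the binomial theorem; only then are Lemma~\ref{auvw} (for ${\sf E}[X(G)]$) and Lemma~\ref{unconnectedlower} (for the surviving $\frac{1}{2}P_U(k,n)$) invoked. So your guess that one copy of the unconnected-indicator term absorbs the correction while the other survives with the factor $\frac{1}{2}$ is indeed the intended mechanism, but the quantitative inequality that legitimizes the absorption is precisely what your proposal does not supply.

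It is also worth saying that your kernel-size remark is not a side issue; it points at exactly this inequality. For an unconnected $G$ with $\rho\ge 2$ components, $\frac{1}{2}\sum_{u=0}^k A_{u,v}(G)=2^{\rho-1}\,|\{c\in{\sf span}(M(G)): w_H(c)=v\}|$, which can exceed ${n \choose v}$ (a triangle plus an isolated vertex, $k=4$, $n=3$, $v=2$, gives $6>3$), so the binomial-theorem justification is airtight only when the left kernel of $M(G)$ has size $2$, i.e., in the connected case where it is not needed. In short, you have correctly located the one step that carries the real content (and the real delicacy) of the argument, but as written your proposal stops short of establishing it, so it is incomplete at exactly that point.
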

\begin{proof}
As in the case of the upper bound, we start from the definition of $P_f(G,\epsilon)$ and then bound $P_f(G,\epsilon)$ from below:
\begin{eqnarray} \nonumber
P_f(G,\epsilon) &=&\Bbb I[G \mbox{: \con}] \left(\sum_{E' \subset E(G)}  \Bbb I[(V(G),E \backslash E') \mbox{: \unc}] \epsilon^{|E'|} (1-\epsilon)^{n - |E'|}\right)+ \Bbb I [G \mbox{: \unc}] \\ \nonumber
&\ge& \Bbb I[G \mbox{: connected}] \left(\sum_{v=1}^n 
\sum_{E' \subset E(G)} \Bbb I [E' \mbox{ is a cut-set of }G, \     \omega(E')= v]
\epsilon^{v} (1-\epsilon)^{n-v}   \right) + \Bbb I [G \mbox{: unconnected}] \\ \nonumber
&=& \Bbb I[G \mbox{: \con}] \left(\sum_{v=1}^n  B_v(G)
\epsilon^{v} (1-\epsilon)^{n-v}   \right) 
+ \Bbb I [G \mbox{: \unc}] \\ \nonumber
&=& \Bbb I[G \mbox{: \con}]\left(\frac 1 2 \sum_{v=1}^n  \sum_{u=0}^k A_{u,v}(G)\epsilon^{v} (1-\epsilon)^{n-v} \right)
+ \Bbb I [G \mbox{: \unc}] \\ \nonumber
&=& (1-\Bbb I[G \mbox{: \unc}]) \left(\frac 1 2 \sum_{v=1}^n  \sum_{u=0}^k A_{u,v}(G)\epsilon^{v} (1-\epsilon)^{n-v} \right)
+ \Bbb I [G \mbox{: \unc}] \\ \nonumber
&\ge&\left(\frac 1 2 \sum_{v=1}^n  \sum_{u=0}^k A_{u,v}(G)\epsilon^{v} (1-\epsilon)^{n-v} \right)
 - \frac 1 2 \Bbb I [G \mbox{: \unc}] +  \Bbb I [G \mbox{: \unc}] \\
&=&\left(\frac 1 2 \sum_{v=1}^n  \sum_{u=0}^k A_{u,v}(G)\epsilon^{v} (1-\epsilon)^{n-v} \right)
 + \frac 1 2 \Bbb I [G \mbox{: \unc}].
\end{eqnarray}
Lemma \ref{bwg} is used in the above derivation. The inequality based on the binomial theorem
\begin{eqnarray} \nonumber
\frac 1 2 \sum_{v=1}^n  \sum_{u=0}^k A_{u,v}(G) \epsilon^{v} (1-\epsilon)^{n-v} 
&\le& \frac 1 2\sum_{v=0}^n   {n \choose v}\epsilon^{v} (1-\epsilon)^{n-v} 
= \frac 1 2
\end{eqnarray}
is also exploited in the above transformation. By taking the expectation, 
we immediately obtain the lower bound as follows: 
\begin{eqnarray} \nonumber
{\sf E}[P_f(G, \epsilon)] &\ge& 
{\sf E}\left[ \frac 1 2 \sum_{v=1}^n  \sum_{u=0}^k A_{u,v}(G) \epsilon^{v} (1-\epsilon)^{n-v}  + \frac 1 2 \Bbb I [G \mbox{: \unc}] \right]   
\\ \nonumber
&=& 
 \frac 1 2 \sum_{v=1}^n  \sum_{u=0}^k {\sf E}\left[ A_{u,v}(G) \right] \epsilon^{v} (1-\epsilon)^{n-v} + \frac 1 2 P_U(k,n) \\ \nonumber
&\ge& 
\frac{1}{2{{k \choose 2}\choose  n}} \sum_{v=1}^n  \sum_{u=0}^k 
\gam  \epsilon^{v}   (1-\epsilon)^{n-v}  \\
&+& \frac{k}{2{{k \choose 2} \choose n}} \left( {{k-1 \choose 2} \choose n} - (k-1){{k-2 \choose 2} \choose n} \right). 
\end{eqnarray}
For bounding  $P_U(k,n)$ from below,  Lemma \ref{unconnectedlower} has been used.
\end{proof}

\section{Numerical evaluation}
In order to verify the tightness of the bounds on the expected network failure probability proved in the previous section, 
we have performed numerical evaluations for the expected network failure probability.
Figure \ref{fig1st} presents the upper and lower bounds on the expected network failure probability for the case in which 
$k=6$ and $n=12$. 
For the purpose of comparison, the exact values are also plotted in Fig. \ref{fig1st}. These exact values have been obtained by
generating all possible graphs and by evaluating the network failure probability using 
a recursive computation referred to as {\em pivotal decomposition} \cite{IRE}.
The horizontal axis represents the edge failure probability $\epsilon$, and the vertical axis denotes the value of the expected network failure probability. 
Note that the upper and lower bounds are reasonably tight in this case. Both bounds are very close to the exact value when $\epsilon$ is smaller than $10^{-2}$.

Figure \ref{fig2nd} shows the values of upper and lower bounds on the expected network failure probability for the case in which $k=7$ and $n=12$. The curve for the exact probability (solid line) approaches a constant value around 
$10^{-2}$ as $\epsilon$ becomes small. The reason for this floor phenomenon is as follows. The unconnected probability is
$P_U(7,12) = 0.0108$ in this case. If $\epsilon$ is small, e.g., $10^{-3}$, the dominant failure event is to select an unconnected graph from the ensemble. Therefore, the expected network failure probability approaches $P_U(7,12)$.
Note that, in the case of Fig. \ref{fig1st}, we can prove $P_U(6,12)=0$ by showing that the upper bound (\ref{PUupper}) takes a value of zero. Thus, we can observe that the curves in Fig. \ref{fig1st} decrease monotonically as $\epsilon$ decreases.

\begin{figure}[htbp]
\begin{center}
\includegraphics[width=0.5 \linewidth]{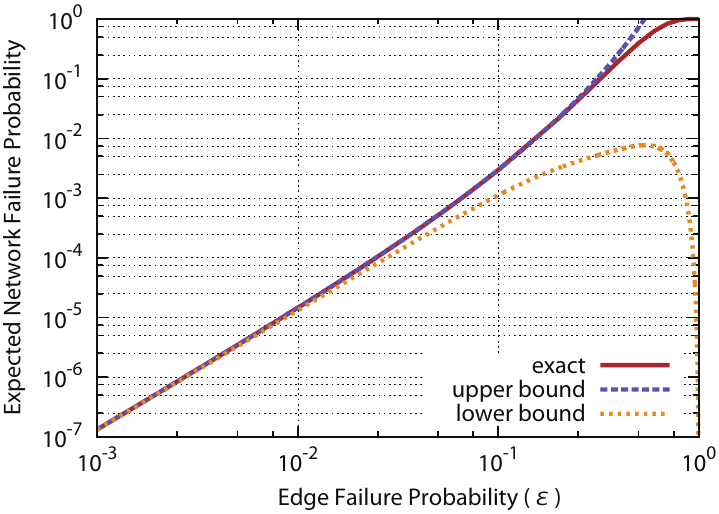} 
\end{center}
	\caption{Expected network failure probability $(k=6, n=12)$: exact, upper and lower bounds}
	\label{fig1st}
\end{figure}
\begin{figure}[htbp]
\begin{center}
\includegraphics[width=0.5 \linewidth]{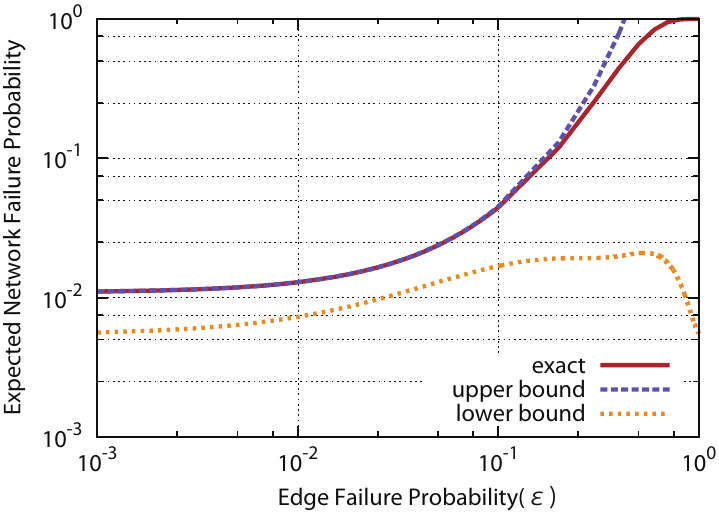} 
\end{center}
	\caption{Expected network failure probability $(k=7, n=12)$: exact, upper and lower bounds}
	\label{fig2nd}
\end{figure}

\section{Conclusion}

In the present paper, upper and lower bounds on the expected network failure probability are derived. The average cut-set weight distribution is key in deriving these bounds. 
The ensemble analysis used here is similar to the analysis of the input-output weight distribution of an LDGM code with column weight 2.
The present study reveals a close relationship between the ensemble analysis of the network failure probability and the ensemble analysis of the error detection probability \cite{wadayama} of LDGM codes with column weight 2. This link between the network reliability problem 
and coding theory may provide a new perspective on the network reliability problem.

In the present study, we focused solely on the upper and lower bounds for fixed $n$ and $k$.  
It would be interesting to investigate the asymptotic behavior of the network failure probability when $n$ and $k$ approach infinity while maintaining the relationship $n = f(k)$ ($f$ is a real-valued function, e.g., $n = \beta k^2$). The error exponent analysis for the undetected error probability of LDPC  codes  shown in \cite{wadayama} is a possible choice for such an asymptotic analysis.

In the present paper, we discussed a simple graph ensemble, which is closely related to 
the Erd\H{o}s-R\'{e}nyi random graph ensemble \cite{RANDOMGRAPH}. In such a graph ensemble, the degree of a node is strongly concentrated around its expectation when $n$ and $k$ become large. It seems natural to consider the behavior of the network failure probability for a graph ensemble with a non-uniform degree distribution. The analysis for deriving the average cut-set weight distribution is approximately equivalent to the analysis of the average weight distribution of an LDGM code ensemble \cite{Hu} or of the average coset weight distribution of an LDPC code ensemble. Some known results, such as those reported in \cite{wadayama2}, may be exploited for the analysis of an ensemble with non-uniform degree distribution. The average cut-set weight distribution derived in the present paper may also be useful for several applications other than the evaluation of the expected network failure probability. For example, statistical information on the max-flow of a randomly chosen graph could be obtained from the average cut-set weight distribution in combination with the min-cut max-flow theorem.

\end{document}